\newcommand{\coloneqq}{\mathrel{\mathop:}=}
\newcommand{\bigmid}{\;\big|\;}
\newcommand{\inp}{\textbf{Input}}
\newcommand{\quest}{\textbf{Question}}
\newcommand{\rahmen}[1]{\fbox{\parbox{10cm}{
\begin{tabular*}{9.8cm}{ll}#1\end{tabular*}
}}}
\newtheorem{definition}{Definition}
\newtheorem*{conj}{Conjecture}
\newtheorem{theorem}{Theorem}
\newtheorem{corollary}{Corollary}
\author{Isolde Adler, Philipp Klaus Krause}
\title{A lower bound for the tree-width of planar graphs with vital linkages }%\\
\begin{document}

\maketitle

\begin{abstract}
The disjoint paths problem asks, given an graph $G$ and $k + 1$ pairs of
terminals $(s_0,t_0), \ldots ,(s_k,t_k)$, whether there are $k+1$ pairwise
disjoint paths $P_0, \ldots ,P_k$, such that $P_i$ connects $s_i$ to $t_i$.
Robertson and Seymour have proven that the problem can be solved in polynomial
time if $k$ is fixed. Nevertheless, the constants involved are 
huge, and the algorithm is far from implementable. 
The algorithm uses a bound on the
tree-width of graphs with vital linkages, and deletion of irrelevant vertices. We
give single exponential lower bounds both for the tree-width of planar graphs with vital linkages, 
and for the size of the grid necessary for finding irrelevant 
vertices.
\end{abstract}

%%%%%%%%%%%%%%%%%%%%%%%%%%%%%%%%%%%%%%%%%

\section{Introduction}

The \emph{disjoint paths problem} is the following problem.

\smallskip\rahmen{
	$\inp$: Graph $G$, terminals $(s_0,t_0),\ldots, (s_k,t_k)\in V(G)^{2(k+1)}$\\
	$\quest$: Are there $k+1$ pairwise vertex disjoint paths\\ $\quad \quad P_0,\ldots,P_k$ in $G$
		such that $P_i$ has endpoints $s_i$ and $t_i$?
	}	

\smallskip\noindent	
It is a classic problem in algorithmic graph theory and it has many applications, e.\,g.
in routing problems, PCB design and VLSI layout~\cite{AggarwalKW00, LeisersonM85}.
It is NP-hard~\cite{Karp}, and it remains NP-hard on planar graphs~\cite{Lynch}. 
Robertson and Seymour proved that for fixed $k$ it is decidable in polynomial 
time~\cite{Robertson}. More precisely, they showed that it 
can be decided  
in FPT time $f(k)\cdot \left|V(G)\right|^3$, 
where $f$ is a computable function. For planar graphs, Reed et 
al.\ \cite{Reed} gave an algorithm that solves the problem 
in time 
$g(k)\cdot \left|V(G)\right|$ for some computable function $g$. 
Both functions $f$ and $g$ are not really made explicit. They are huge towers of exponentiations, 
and the algorithms are
far from being implementable, even for small $k$. 
Hence an important task is to improve the
algorithms towards a better dependency on the parameter $k$. 

We address this problem, exhibiting
a lower bound on the parameter dependency arising from an essential technique used in both 
algorithms~\cite{Robertson,Reed}.
This technique is finding 
\emph{irrelevant vertices}~\cite{gm-irrelevant}, i.e.\ vertices in
the input graph $G$, such that their deletion does not affect the answer to the problem.
This is closely related to the theorem on \emph{vital linkages}~\cite{gm-linkage}, and it is a main
source of the impractical parameter dependency. 
The irrelevant vertices can be guaranteed if $G$ contains a sufficiently large 
subdivided grid as a subgraph (depending on $k$).
Hence an interesting open question is to determine tight bounds on the size of the
subdivided grid, necessary for $G$ to contain an irrelevant vertex. 
In this paper we
give a lower bound by showing that a  
$(2^k+1)\times (2^k+1)$ grid may not suffice -- even in planar graphs.
Despite recent progress \cite{Shorter}, the quest for good upper bounds is
still open.  

\begin{figure}
\centering
\includegraphics{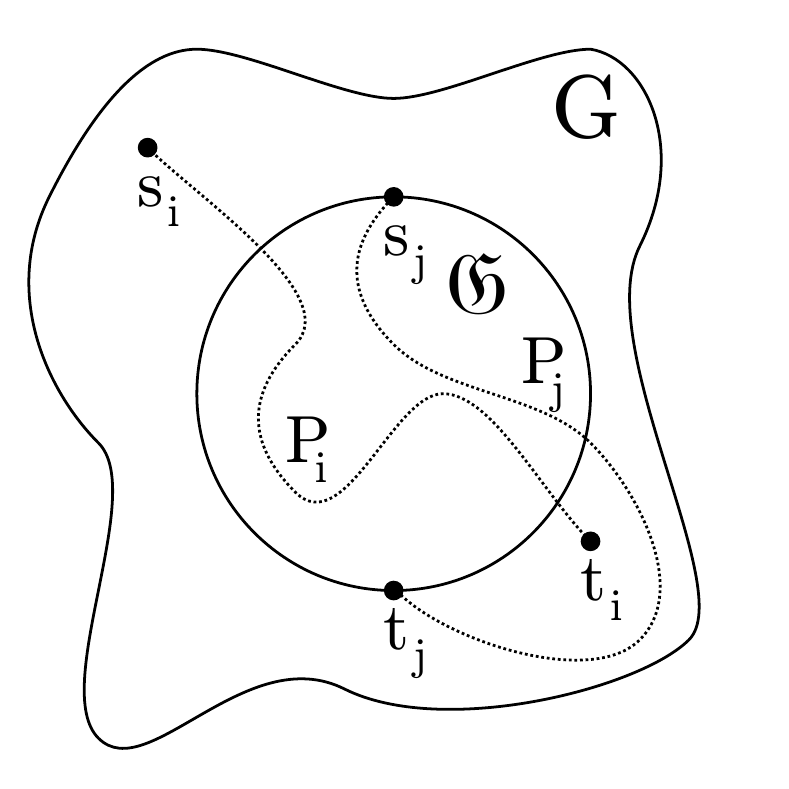}	
\caption{DPP in planar Graph $G$ containing grid $\mathfrak{G}$\label{PGraph}}
\end{figure}

Indeed, grids occur naturally in many graphs that are relevant in practical applications of the
disjoint paths problem, such as PCB and VLSI design. 
For many classes of NP-hard graph problems, attempts have been made to reduce
the computational complexity by restricting the class of input graphs by an
upper bound on a width parameter, such as tree-width and rank-width~\cite{widthparameters}. 
However, grids remain notoriously hard to deal with~\cite{grid-tree, grid-path, grid-rank}.

%%%%%%%%%%%%%%%%%%%%%%%%%%%%%%%%%%%%%%%%%

\section{Preliminaries}

Graphs are finite, undirected and simple. We denote the vertex set of a graph
$G$ by $V(G)$ and the edge set by $E(G)$. Every edge is a two-element subset of
$V(G)$. A graph $H$ is a \emph{subgraph} of a graph $G$, denoted by $H\subseteq G$, if $V(H)\subseteq V(G)$ and
$E(H)\subseteq E(G)$. A \emph{path} in a graph $G$ is a sequence $P=v_1,\ldots,v_n$ of pairwise distinct 
vertices of $G$, such that $\{v_i,v_{i+1}\}\in E(G)$ for all $1\leq i\leq n-1$.
We use standard graph terminology as in~\cite{diestel2005}.

\begin{definition}[Grid]
Let $m,n\geq 1$. The $(m\times
n$) \emph{grid} is the graph $\mathfrak{G}_{m,n}$ given by
\begin{align*}
	V(\mathfrak{G}_{m,n}):=&\big\{1,\ldots,m\big\}\times\big\{1,\ldots,n\big\},\text{ and}\\
	E(\mathfrak{G}_{m,n}):=&\big\{\{(i_1,i_2),(j_1,j_2)\}\bigmid (i_1=j_1\text{ and } |j_2-i_2|=1)\\
&\hspace{2.4cm}\text{ or }(|j_1-i_1|=1\text{ and } i_2=j_2)\big\}
\end{align*}
\end{definition}

A \emph{subdivided grid} is a graph obtained from a grid by replacing some edges of the grid
by pairwise internally vertex disjoint paths of length at least one.
\emph{Embeddings} of graphs in the plane, \emph{planar graphs} and \emph{faces} are defined in the usual way.

\begin{definition}[Inner vertex]
Let $\mathfrak{G}$ be a grid embedded in the plane. An \emph{inner vertex} of $\mathfrak{G}$ 
is a vertex that does not lie on the outer face of $\mathfrak{G}$.
\end{definition}

\begin{definition}[Crossing]
We say that a path \emph{crosses} the grid $\mathfrak{G}$ if it contains an
inner vertex of $\mathfrak{G}$ and its endpoints are not inner vertices of
$\mathfrak{G}$. For $k \in \mathbb{N}$ we say that a path $P = p_0, p_1,
\ldots, p_n$ \emph{crosses $\mathfrak{G}$} $k$ times, if it can be split into
$k$ paths $P_0 = p_0, p_1, \ldots, p_{i_1}, P_1 = p_{i_1}, p_{i_1 + 1}, \ldots
p_{i_2}, \ldots, P_{k - 1} = p_{i_{k - 1}}, p_{i_{k - 1} + 1}, \ldots, p_n$
with each $P_i, i = 0, \ldots k - 1$ crossing $\mathfrak{G}$.
\end{definition}

%%%%%%%%%%%%%%%%%%%%%%%%%%%%%%%%%%%%%%

\section{The lower bound}

From now on we will consider the case of a planar graph $G$ containing a grid
$\mathfrak{G}$ with all $s_i$ and $t_i$ lying on the edge of or outside the
grid $\mathfrak{G}$ (Figure \ref{PGraph}).
Intuitively, we construct our example from a grid $\mathfrak{G}$ of sufficient size. We add endpoints $s_0$ and $t_0$ on the boundary of the grid, mark the areas opposite to the grid as not part of the graph and connect $s_0$ to $t_0$ without crossing the grid. Now we continue to mark vertices by $s_i$ and $t_i$ 
in such a way that $P_i$ has to cross $\mathfrak{G}$ as often as possible (in order to avoid crossing $P_j, j < i$). Once $s_i$ and $t_i$ have been added we remove the area opposite to the grid from $s_i$ from the graph. Figure \ref{ExH} shows the situation after doing this for $i$ up to $2$.
In this construction $P_0$ does not cross the grid at all, while $P_1$ crosses it once and $P_{i+1}$ crosses it twice as often as $P_i$ for $i > 0$: Let $k_i$ be the number of times $P_i$ crosses the grid. $k_0 = 0, k_1 = 1, k_{i + 1} = 2 k_i, k_i = 2^{i - 1}, i > 0$. After the last $P_i$ has been added, the areas opposite to the grid from both $s_i$ and $t_i$ are removed from the graph as seen in Figure \ref{ExLast}.

Formally, to construct problem and graph with $k + 1$ terminals, we use a $(2^k
+ 1) \times (2^k + 1)$ grid. Let the vertices on the left border of the grid be
$n_0, \ldots, n_{2^k}$. Terminals are assigned as follows: $t_0$ is the topmost
vertices on the left border on the grid, $t_1$ the middle vertices on the right border.
For all other terminals: $s_i \coloneqq n_{2^{k-i}}, t_i \coloneqq n_{3 \cdot
2^{k-i}}$. Then add edges going around the $t_i$ to the graph: For $i > 1, t_i
= n_j$ add $n_{j - 1}n_{j + 1}, n_{j - 2}n_{j + 1}, \ldots, n_{j - 2^{k - i} -
1}n_{j + 2^{k - i} - 1}$, and on the right border of $\mathfrak{G}$ do the
analogue for $t_1$. See Figure \ref{ExGrid} for a graph constructed this way.

\begin{figure}
%\centering
\centerline
{
\subfigure[$P_0, \ldots, P_2$\label{ExH}]{\includegraphics[scale=1.4]{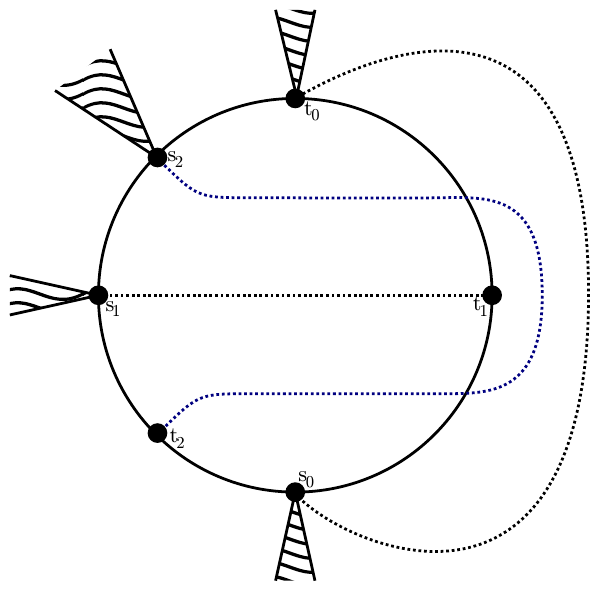}}
\subfigure[$P_0, \ldots, P_3$]{\includegraphics[scale=1.4]{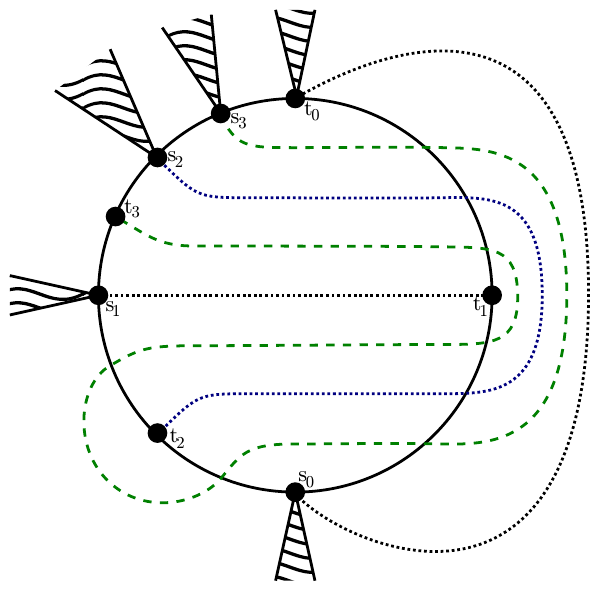}}
}
\centerline
{
\subfigure[$P_0, \ldots, P_4$\label{ExLast}]{\includegraphics[scale=1.4]{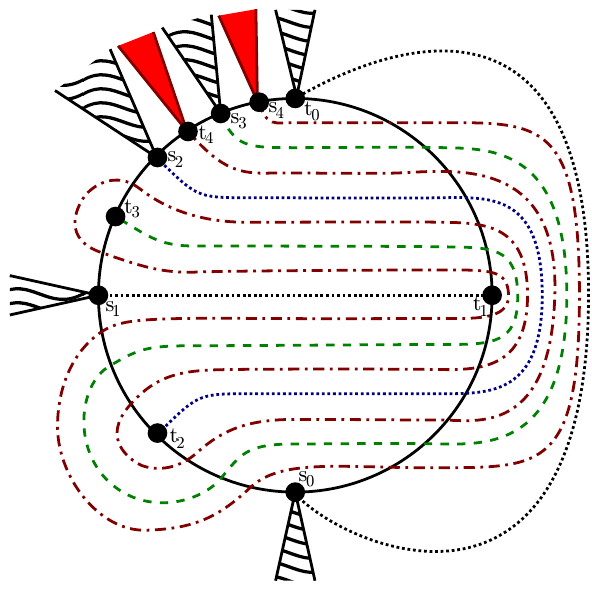}}
\subfigure[Actual graph with grid\label{ExGrid}]{\includegraphics[scale=1.4]{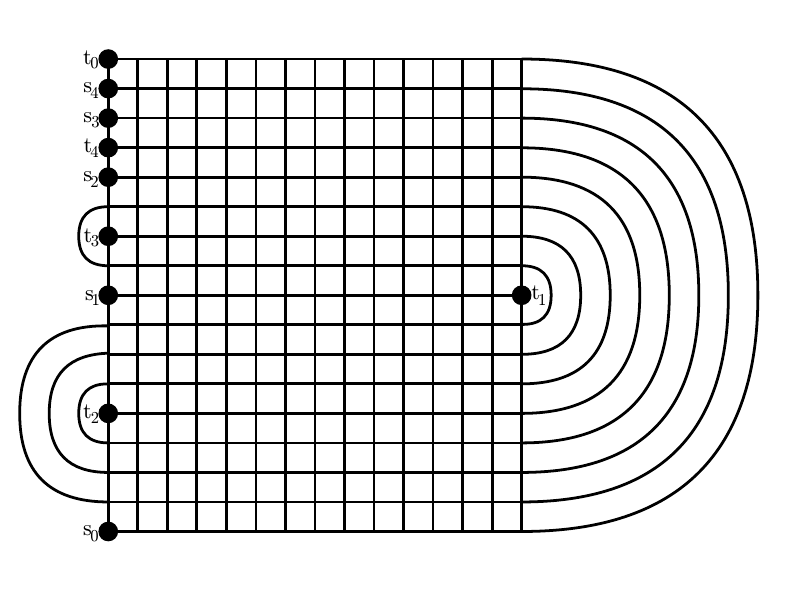}}
}
\caption{Construction of graph and solution}
\end{figure}

\begin{theorem}
There is only one solution to the constructed DPP, all vertices of the graph lie on paths of the solution and the grid is crossed $2^k -1$ times by such paths.
\end{theorem}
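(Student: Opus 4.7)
The plan is to prove all three statements (uniqueness of the solution, full vertex coverage, and total grid crossings $2^k-1$) by a single induction on the index $i$. The inductive hypothesis is that, once $P_0,\ldots,P_{i-1}$ have been placed, the path $P_i$ is forced to follow a unique route that crosses $\mathfrak{G}$ exactly $k_i$ times, where $k_0=0$, $k_1=1$ and $k_{i+1}=2k_i$ for $i\geq 1$. Summing then gives $\sum_{i=0}^{k}k_i = 0+1+2+\cdots+2^{k-1} = 2^k-1$, which delivers the crossing count; a direct tally of the rows, columns, and exterior arcs used by these forced paths will yield the coverage claim.

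For the base cases: the region of the plane ``opposite the grid'' from $s_0,t_0$ is deleted before any other terminal is introduced, so the only remaining $s_0$--$t_0$ connection lies along a single exterior arc that never enters the grid, pinning down $P_0$ uniquely with $k_0=0$. For $P_1$, the blocked top and bottom regions together with $P_0$ separate the piece of $\partial\mathfrak{G}$ containing $s_1$ from the piece containing $t_1$, and the bypass edges around $t_1$ restrict the entry ports; planarity and the Jordan curve theorem together force exactly one crossing and exclude any alternative route.

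For the inductive step, suppose $P_0,\ldots,P_i$ are forced. Their union, together with the blocked exterior regions opposite each $s_j$ and the bypass edges around each $t_j$ for $j\leq i$, partitions the plane carrying the residual graph into topological faces. The key geometric fact is that $s_{i+1}$ and $t_{i+1}$ sit in faces which are separated by $P_i$ in such a way that any $s_{i+1}$--$t_{i+1}$ path must enter and leave each of the $k_i$ corridors cut by $P_i$, incurring $2k_i = k_{i+1}$ crossings of $\mathfrak{G}$. The bypass edges around $t_{i+1}$ and the removal of the area opposite $s_{i+1}$ then force the remaining freedom to vanish, yielding the zig-zag visible in Figures~\ref{ExH}--\ref{ExGrid}. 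At the end of the induction a straightforward inspection shows that every grid vertex lies on exactly one of the forced paths and every external vertex is either a terminal, on a bypass edge, or on a connecting arc used by some $P_i$, which gives the coverage claim.

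The main obstacle will be making the topological forcing step rigorous rather than merely visually convincing: one needs an invariant, maintained through the induction, recording how many ``free channels'' remain inside $\mathfrak{G}$ after placing $P_0,\ldots,P_i$ and showing that each bypass construction around $t_i$ exactly halves the channels available to $P_{i+1}$. I would formalise this by associating with each stage a planar embedding of the residual graph together with a cyclic ordering of the grid-boundary entry points around each remaining face, and then invoke the fact that any path between two such entry points in a planar graph must cross $\partial\mathfrak{G}$ a number of times determined by their relative positions in this cyclic order, together with the locations of the previously placed paths. With this invariant in hand, the doubling recursion $k_{i+1}=2k_i$ falls out immediately and uniqueness follows because at every stage any deviation from the forced route would either cross an earlier $P_j$ or leave $G$ through a removed region.
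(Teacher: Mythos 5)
Your overall strategy --- the doubling recursion $k_0=0$, $k_1=1$, $k_{i+1}=2k_i$, summed to $2^k-1$ --- is the same idea the paper uses, but you run the induction in the opposite direction, and that reversal creates a genuine gap. The paper argues from the innermost pair outward: $P_k$ must cross $\mathfrak{G}$ at least once, the need to route $P_k$ around $t_{k-1}$ doubles its crossings, and so on down to $P_1$; only after all these forced crossings have consumed every edge on the left border of $\mathfrak{G}$ and all but one on the right does it conclude that $P_0$ has no way into the grid. Your base case instead asserts that $P_0$ is ``pinned down uniquely'' by the deleted exterior regions alone, because the only remaining $s_0$--$t_0$ connection is a single exterior arc that never enters the grid. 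That is not true: $s_0$ and $t_0$ lie on the boundary of a connected grid, so there are very many $s_0$--$t_0$ paths passing through inner vertices of $\mathfrak{G}$, and nothing local to $s_0,t_0$ excludes them. They are excluded only because the other $k$ paths are forced to exhaust the grid's boundary capacity --- i.e.\ the uniqueness of $P_0$ is a \emph{consequence} of the analysis of $P_1,\dots,P_k$, not a base case for it.

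The same problem recurs at every stage of your forward induction. A solution to the DPP is a simultaneous family of disjoint paths, not a sequence of placements, so ``once $P_0,\dots,P_i$ have been placed, $P_{i+1}$ is forced'' does not establish uniqueness unless you show that in an \emph{arbitrary} solution $P_{i+1}$ has no other option; and the constraints that remove $P_{i+1}$'s freedom come as much from the later paths $P_j$, $j>i+1$ (which still need room for their own crossings) as from the earlier ones. For instance, given only $P_0$, nothing yet prevents $P_1$ from crossing $\mathfrak{G}$ three times or taking a different single-crossing route; what forbids this is that such a route would starve $P_2,\dots,P_k$. The repair is to follow the paper's logic: establish \emph{lower bounds} on the number of crossings of each $P_j$ (your corridor/separation argument, applied with the indices running downward from $k$), then observe that the total $2^k-1$ crossings saturate the available entry edges on the grid boundary, which simultaneously pins down every path, forces all vertices to be used, and rules out any route for $P_0$ through the grid. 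Your closing remark that the topological forcing step still needs a rigorous invariant is fair --- the paper's own proof is equally informal there --- but the direction-of-forcing issue above is a substantive logical gap, not merely a matter of polish.
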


\begin{proof}
To connect $s_k$ to $t_k$ we need to cross $\mathfrak{G}$ at least once (Figure \ref{Proof}). However, a solution in which $P_K$ crosses $\mathfrak{G}$ only once would block $P_{k - 1}$. To connect $s_{k - 1}$ to $t_{k - 1}$ $P_k$ has to be routed around $t_{k - 1}$, which requires leaving and reentering $\mathfrak{G}$ (Figure \ref{ProofI}). Thus inductively constructing a solution each $P_i, i > 0$ requires a crossing of $\mathfrak{G}$ and doubles the number of crossings in each $P_j, j > i$. The solution uses all edges on the left side of $\mathfrak{G}$ and uses all but one of the edges on the right side. Thus the only way to connect $s_0$ to $t_0$ is without crossing the grid.
\end{proof}

In particular, $\mathfrak{G}$ has no \emph{irrelevant} vertex in the sense of \cite{gm-irrelevant}.

\begin{corollary}
There is a planar graph $G$ with $k+1$ pairs of terminals such that 
\begin{itemize}
	\item $G$ contains a $(2^k+1)\times (2^k+1)$ grid as a subgraph, 
	\item the disjoint paths problem on this input has a unique solution,
	\item the solution uses all vertices of $G$; in particular, no vertex of $G$ is irrelevant.
\end{itemize}
\end{corollary}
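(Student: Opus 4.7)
My plan is to argue by induction, ordering the terminal pairs from the innermost (highest index) outward, using planarity of $G$ together with the carefully placed arcs $n_{j-l}n_{j+l}$ around each $t_i$ as the main topological obstruction. Because the construction removes the plane region ``opposite to'' each terminal, any candidate solution is confined to a bounded outer strip around $\mathfrak{G}$ together with the interior of $\mathfrak{G}$ itself; informally, the only way from one side of $\mathfrak{G}$ to the other is to cross the grid, and the only way to go around a $t_i$ on the same side is to leave and re-enter $\mathfrak{G}$.

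I would set up the following inductive claim: for $i>0$, in any solution the path $P_i$ crosses $\mathfrak{G}$ exactly $2^{i-1}$ times, and together the paths $P_0,\ldots,P_i$ are forced to use a specific set of grid-boundary edges. The base case is $P_k$: because $s_k$ and $t_k$ sit on the boundary with all outside regions deleted, $P_k$ must contain an inner vertex of $\mathfrak{G}$, so it crosses at least once. For the inductive step, the arcs $n_{j-l}n_{j+l}$ added around $t_i=n_j$ form nested Jordan curves with the segment of grid boundary they span; any path $P_{i'}$ with $i'>i$ that must reach across $t_i$ is therefore forced by planarity to dive back into $\mathfrak{G}$ each time it crosses one of these arcs, which precisely doubles its crossing count compared with the previous round. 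Combined with the placement $s_i=n_{2^{k-i}}$, $t_i=n_{3\cdot 2^{k-i}}$, this gives $k_i=2\,k_{i-1}=2^{i-1}$ for $i>0$, and hence a total of $\sum_{i=1}^{k}2^{i-1}=2^k-1$ grid crossings.

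Uniqueness and the vertex-coverage statement should then fall out of the same bookkeeping. Once all the forced grid-boundary edges are accounted for, the only remaining route from $s_0$ to $t_0$ lies entirely outside $\mathfrak{G}$, which is exactly the $P_0$ sketched in the construction, so no freedom is left anywhere. For vertex coverage, I would check that every grid-boundary edge is used by one of the $P_i$ (so every boundary vertex is covered), and that the $2^k-1$ crossings of the interior, each of which is a straight traversal between two boundary vertices of $\mathfrak{G}$, together sweep every inner vertex. Vertices of the outer strip are then handled by observing that the deletions in the construction leave exactly the arcs needed by the forced $P_i$ routes.

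The main obstacle will be making the planarity argument fully rigorous: one must justify that the arcs around $t_i$ together with the forced prefix of $P_i$ genuinely separate the plane in such a way that no alternative routing of $P_{i+1},\ldots,P_k$ exists. A clean way to organise this is to fix the planar embedding extending that of $\mathfrak{G}$ and reason face by face with the Jordan curve theorem, tracking which faces still contain which remaining terminals after each path is placed. Once that is set up, the counting, uniqueness, and ``every vertex is used'' conclusions are simultaneous consequences of the same inductive step.
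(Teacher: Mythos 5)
Your proposal follows essentially the same route as the paper: the corollary is there deduced immediately from Theorem~1, whose proof is exactly your induction from the innermost pair $P_k$ outward, with the arcs around each $t_i$ forcing the doubling $k_i = 2k_{i-1}$, the total of $2^k-1$ crossings, and $P_0$ left with the unique grid-free route. The paper's own argument is no more detailed than your sketch (it relies on figures for precisely the planarity/separation step you flag as the main remaining obstacle), so the approaches coincide.
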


\begin{conj} There is a function $f\in 2^{\mathcal O(k)}$ such that
for every planar input $G$ together with $k+1$ pairs of terminals: if $G$ contains a 
subdivided $f(k)\times f(k)$ grid as a subgraph, then $G$ contains an irrelevant vertex.
\end{conj}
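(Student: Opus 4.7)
The plan is to turn the doubling behavior exhibited by the lower-bound construction into an upper bound: if every additional terminal pair can at most double the total number of grid crossings in an optimal solution, then $O(2^k)$ crossings suffice, and any grid much larger than that must contain a vertex unused by every solution and amenable to rerouting. First I would normalize the instance: standard reductions (pushing terminals to the boundary of $\mathfrak{G}$ by rerouting through degree-four grid vertices, and compressing repeated visits of a single path to the same subregion) let us assume every terminal lies on the outer face of $\mathfrak{G}$ and every path $P_i$ crosses $\mathfrak{G}$ only $k_i$ times for some finite $k_i$. I would then fix a canonical solution minimizing a lexicographic complexity measure, for example the tuple $(k_k, k_{k-1}, \ldots, k_0)$, to serve as the baseline.

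The heart of the proof is a key technical lemma: in every such minimum solution $\sum_{i=0}^{k} k_i \le c\cdot 2^k$ for some absolute constant $c$. The proof should proceed by induction on $k$, using planarity crucially. In the planar setting each already-routed $P_i$ separates the disc enclosed by $\mathfrak{G}$ into two regions, so adding the next terminal pair can force at most one new crossing per route around an obstruction, and the number of obstructions can at most double per step. To make this rigorous, the inductive hypothesis would have to record the cyclic sequence of entry and exit arcs of the partial solution on the outer face of $\mathfrak{G}$, since this combinatorial data is what governs how many extra crossings the next pair can impose. The resulting bound would match the $\sum_{i>0} 2^{i-1}=2^{k}-1$ counted in the construction of the previous theorem.

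Once the bound is in hand, take $f(k) = C\cdot 2^k$ for a constant $C$ large enough so that the $f(k)\times f(k)$ subdivided grid contains a concentric band of width $\Omega(2^k)$ in whose interior no path of the minimum solution ventures. The central vertex $v$ of this band should be irrelevant: for any hypothetical alternative solution the same crossing bound limits how many paths can traverse the band, so each path through $v$ can be rerouted around $v$ through one of the many free parallel grid lines while preserving disjointness. The main obstacle is the key lemma itself. The doubling intuition is suggestive but not automatic, since one has to rule out solutions that trade crossings between different $P_i$ in pathological ways, and the inductive invariant needs to be strong enough to carry the combinatorial structure on the outer face through each step. A secondary difficulty is that irrelevance is a statement about \emph{all} solutions, so the final rerouting argument must be uniform in the choice of solution, which may cost another constant factor in $f(k)$.
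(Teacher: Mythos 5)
The statement you are proving is stated in the paper as a \emph{conjecture}: the paper offers no proof of it, only the matching lower bound showing that a $(2^k+1)\times(2^k+1)$ grid may not suffice, so there is no paper argument to compare against and your proposal must stand on its own. It does not: the entire difficulty is concentrated in your ``key technical lemma'' that every crossing-minimal solution satisfies $\sum_i k_i \le c\cdot 2^k$, and for that lemma you offer only the doubling intuition extracted from the paper's lower-bound construction. That construction shows an adversary \emph{can} force a doubling of crossings at each step; it says nothing about whether an adversary can force more. Your inductive sketch (``each already-routed $P_i$ separates the disc\ldots the number of obstructions can at most double per step'') treats the solution as if it were built greedily one path at a time, but a solution is a simultaneous routing of all $k+1$ paths; its crossing-minimal representative is governed by the homotopy classes of the paths in the disc punctured by the terminals and by whatever $G$ looks like outside the grid, and bounding the combinatorial complexity of those classes is exactly the content of the Robertson--Seymour vital-linkage theorem, whose known proofs give bounds far worse than single-exponential. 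Declaring the bound and ``recording the cyclic sequence of entry and exit arcs'' as the inductive invariant restates what needs to be proved; it does not prove it.

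There are secondary problems as well. The normalization step (``pushing terminals to the boundary of $\mathfrak{G}$'') is not a standard reduction: in the conjecture $G$ is an arbitrary planar graph containing a subdivided grid as a subgraph, terminals may lie anywhere, and grid vertices may have further attachments; you can pass to a subgrid avoiding the $2(k+1)$ terminals, but you cannot freely relocate terminals while preserving the instance. Your final paragraph also misstates what irrelevance requires: you do not need to reroute \emph{every} solution around $v$, only to show that \emph{if} a solution exists then \emph{some} solution avoids $v$ (deleting a vertex can never create a solution), which is precisely what the minimality argument would deliver if the key lemma held. The worry about uniformity over all solutions is therefore a red herring, but it signals that the logical shape of the irrelevant-vertex argument is not yet in place. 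As it stands, the proposal identifies the right target and the right order of magnitude, but leaves open exactly the gap that makes the statement a conjecture.
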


%%%%%%%%%%%%%%%%%%%%%%%%%%%%%%%%%%%%%%

\section{Vital linkages and tree-width}

We refer the reader to~\cite{Bodlaender93} for the definitions of \emph{tree-width} and \emph{path-width}.

\begin{definition}[Vital linkage]
	Let $L$ be a subgraph of $G$ such that every component of $L$ is a path and all vertex of $G$ are vertices of $L$. The \emph{pattern} of $L$ is the set of vertices of degree $1$ in $L$. $L$ is a \emph{vital linkage} in $G$ if there is no other such $L$ that has the same pattern.
\end{definition}

\begin{theorem}[Robertson and Seymour \cite{gm-linkage}]\label{thm:vital}
There are functions $f$ and $g$ such that if $G$ has a vital linkage with $k$ components then $G$ has tree-width at most $f(k)$ and path-width at most $g(k)$.
\end{theorem}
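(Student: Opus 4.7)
The plan is to argue by contrapositive: assuming $G$ has large tree-width, I would show that any linkage $L$ with $k$ components cannot be vital, by producing a distinct linkage $L'$ with the same pattern. The starting point is the Excluded Grid Minor Theorem of Robertson and Seymour: there is a computable function $h$ such that every graph of tree-width at least $h(r)$ contains the $r\times r$ grid as a minor. Choosing $r = r(k)$ sufficiently large, any $G$ with tree-width exceeding $f(k) := h(r(k))$ contains a huge grid minor $\mathfrak{G}$, witnessed by pairwise disjoint connected branch sets in $G$.

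The heart of the argument is a rerouting lemma. Since $|V(\mathfrak{G})|$ dwarfs $k$, only a bounded number of branch sets can contain endpoints of $L$, and a large ``interior'' sub-grid $\mathfrak{G}'\subseteq \mathfrak{G}$ is disjoint from the pattern. The linkage $L$ enters and leaves $\mathfrak{G}'$ along at most $2k$ boundary points, partitioning the part of $L$ inside $\mathfrak{G}'$ into at most $k$ subpaths with fixed endpoints on $\partial\mathfrak{G}'$. The plan is to show that in a sufficiently large grid, any such system of at most $k$ internally disjoint subpaths with prescribed boundary endpoints admits an alternative realization: one can locally detour one subpath through a block of rows and columns that no other subpath uses. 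Replacing the original subpath with this detour yields a new linkage $L'$ in $G$ with exactly the same pattern as $L$ but with a different edge set, contradicting vitality.

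For the path-width bound $g(k)$, I would refine the above by appealing to the stronger obstruction for path-width: large path-width forces either a large grid minor or a long ``comb'' / ``complete binary tree'' minor. In either case, the same local-rerouting principle applies with an even simpler topology, which tends to give a better (but still large) function $g$ than $f$.

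The main obstacle, and the reason the resulting $f$ and $g$ are astronomical, is the rerouting step. Making it rigorous requires controlling how $L$ weaves through the grid branch sets at $\partial\mathfrak{G}'$ and guaranteeing that at least one of the possible local detours is compatible with the global topology of $L$ --- which in the original Robertson--Seymour proof is handled by a delicate structural analysis via flat walls, tangles, and the graph-minors machinery. The lower bound from the Corollary above shows that, at least for planar inputs, any such $f$ must be at least singly exponential in $k$, so the tower-type behaviour of the known upper bound cannot be entirely blamed on proof inefficiency at the rerouting stage.
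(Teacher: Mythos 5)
The paper does not prove this statement at all: it is quoted verbatim from Robertson and Seymour's ``unique linkage'' paper \cite{gm-linkage} and used as a black box, so there is no in-paper proof to compare against. Judged on its own terms, your sketch follows the natural first line of attack (contrapositive via the Excluded Grid Theorem, then reroute inside the grid), but its central step is not just unproven --- it is false, and the counterexample is the main theorem of the very paper you are reading. You claim that the linkage $L$ meets the boundary of the interior subgrid $\mathfrak{G}'$ in at most $2k$ vertices, so that $L\cap\mathfrak{G}'$ splits into at most $k$ subpaths with prescribed endpoints on $\partial\mathfrak{G}'$. This conflates the number of \emph{components} of $L$ (which is $k$) with the number of times those components \emph{cross} the grid, and the latter admits no such bound: in the construction above, a single path $P_i$ of a $(k+1)$-component linkage crosses the $(2^k+1)\times(2^k+1)$ grid $2^{i-1}$ times, and the unique solution covers \emph{every} vertex of the grid. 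So the trace of $L$ inside $\mathfrak{G}'$ can consist of a number of subpaths on the order of the perimeter of $\mathfrak{G}'$, every row and column can be saturated, and no ``block of rows and columns that no other subpath uses'' exists. Enlarging $r$ does not help, since the number of possible boundary attachments grows at the same rate as the number of rows. Bounding how often a vital linkage can weave through a large wall is precisely the hard content of Robertson and Seymour's theorem (handled there with tangles and the flat-wall/structure machinery), not a preliminary observation one gets for free.

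Two smaller points. First, under the paper's definition a competing linkage $L'$ must also span $V(G)$, so a ``local detour'' that trades the vertices of one subpath for a fresh block of rows and columns does not even produce a legal competitor; you would need a rerouting that re-covers the abandoned vertices. Second, the path-width addendum does not work as stated: a large complete-binary-tree or comb minor gives you \emph{no} rerouting power, because paths between fixed endpoints in a tree are unique --- trees are exactly where linkages tend to be vital. The correct takeaway from this paper is the one in your last sentence: any proof of Theorem~\ref{thm:vital} must cope with linkages that cross a grid exponentially often in $k$, which is why the naive rerouting argument cannot be patched locally.
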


Recall that the $n\times n$ grid has path-width $n$ and tree-width $n$. 
Our example yields a lower bound for $f$ and $g$:

\begin{corollary} Let $f$ and $g$ be as in Theorem~\ref{thm:vital}. Then
$2^{k - 1} + 1 \leq f(k)$ and $2^{k - 1} + 1 \leq g(k)$. 
\end{corollary}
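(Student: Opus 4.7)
The plan is to derive the corollary directly from the previous theorem by re-indexing the construction and observing that its unique solution is a vital linkage. Concretely, I would apply that theorem with parameter $k-1$ in place of $k$, obtaining a planar graph $G$ with $k$ terminal pairs $(s_0,t_0),\ldots,(s_{k-1},t_{k-1})$ that contains a $(2^{k-1}+1)\times(2^{k-1}+1)$ grid $\mathfrak{G}$ as a subgraph, whose disjoint paths instance admits a unique solution $P_0,\ldots,P_{k-1}$ whose union covers $V(G)$.

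Next I would verify that $L := P_0 \cup \cdots \cup P_{k-1}$ is a vital linkage in $G$ with $k$ components. The $P_i$ are pairwise vertex disjoint paths covering $V(G)$, so the components of $L$ are exactly the $P_i$ and every vertex of $G$ lies in $L$, as demanded by the definition. The pattern of $L$ equals $\{s_0,t_0,\ldots,s_{k-1},t_{k-1}\}$. If some $L'\neq L$ had the same pattern and its components were paths covering $V(G)$, then the components of $L'$ would provide a second set of pairwise vertex disjoint paths joining the same terminal pairs, contradicting the uniqueness part of the previous theorem; hence $L$ is vital. Since the $n\times n$ grid has both tree-width and path-width equal to $n$, and both parameters are monotone under taking subgraphs, $\mathfrak{G}\subseteq G$ yields $\mathrm{tw}(G),\mathrm{pw}(G)\geq 2^{k-1}+1$. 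Combined with Theorem~\ref{thm:vital} applied to $L$, this gives $f(k)\geq \mathrm{tw}(G)\geq 2^{k-1}+1$ and $g(k)\geq \mathrm{pw}(G)\geq 2^{k-1}+1$.

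The argument is essentially bookkeeping, so I do not expect a serious obstacle. The only points that require care are the index shift from ``$k+1$ terminals in the construction'' to ``$k$ components of the linkage'' required by the corollary, and making sure that the previous theorem's uniqueness rules out \emph{all} alternative spanning path-component subgraphs with the same pattern, rather than merely alternative DPP solutions; but any such alternative $L'$ decomposes canonically into a DPP solution, so both issues are handled at once.
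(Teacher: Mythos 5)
Your proof is correct and follows essentially the same route as the paper: identify the unique DPP solution of the grid construction as a vital linkage, then use the tree-width/path-width of the contained grid together with Theorem~\ref{thm:vital} to bound $f$ and $g$ from below. The only difference is cosmetic --- you instantiate the construction at parameter $k-1$ so that the linkage has exactly $k$ components, whereas the paper runs it at $k$ and absorbs the same index shift into the final inequality.
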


\begin{proof}
Looking at the graph $G$ and DPP constructed above the solution to the DPP is, due to its uniqueness, a vital linkage for the graph $G$. $G$ contains a $(2^k+1) \times (2^k+1)$ grid as a minor. The tree-width of such a grid is $2^k+1$ , its path-width $2^k+1$ \cite{grid-path}. Thus we get lower bounds  $2^{k - 1} + 1 \leq f(k), g(k)$ for the functions $f$ and $g$.
\end{proof}

\begin{figure}
\centerline
{
\subfigure[$P_0, P_4$\label{Proof}]{\includegraphics[scale=1.4]{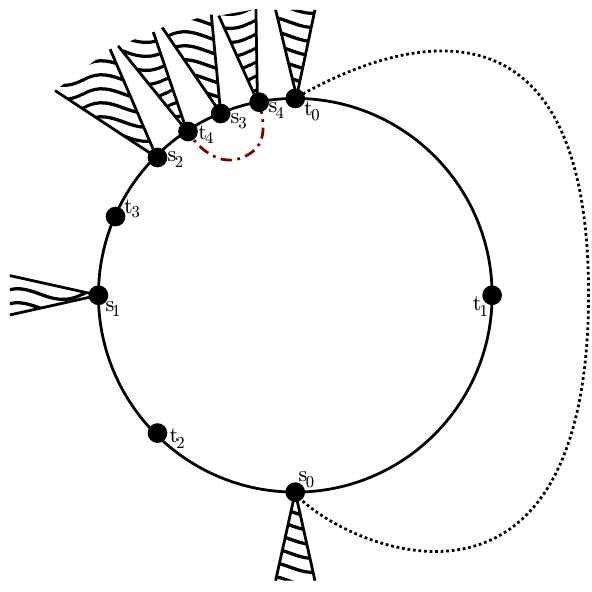}}
\subfigure[$P_0, P_4, \ldots, P_3$\label{ProofI}]{\includegraphics[scale=1.4]{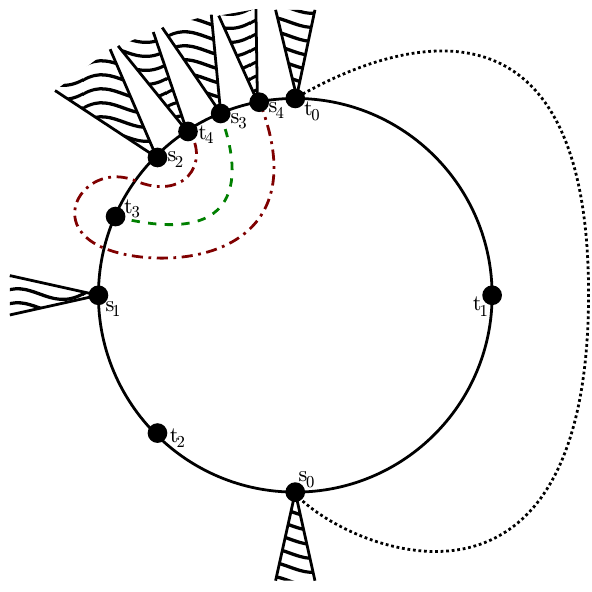}}
}
\centerline
{
\subfigure[$P_0, P_4, \ldots, P_2$]{\includegraphics[scale=1.4]{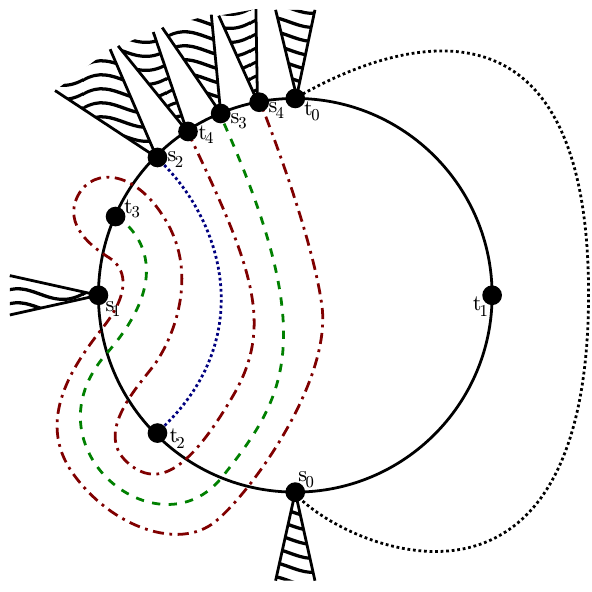}}
\subfigure[$P_0, P_4, \ldots, P_1$]{\includegraphics[scale=1.4]{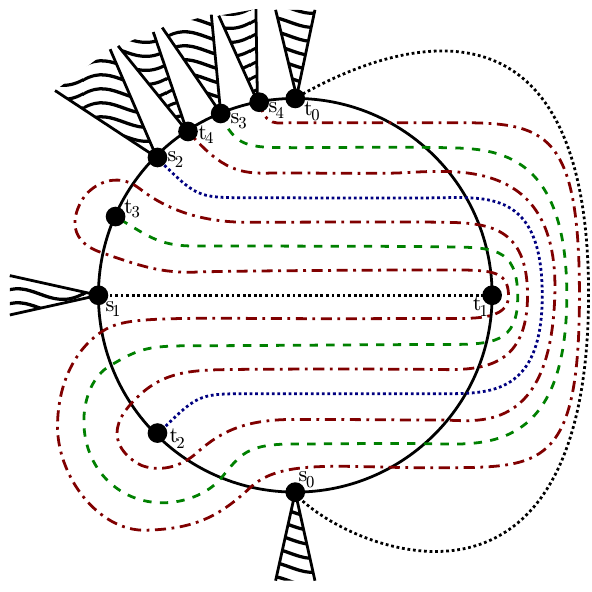}}
}
\caption{Optimality of solution}
\end{figure}
%%%%%%%%%%%%%%%%%%%%%%%%%%%%%%%%%%%%%%

\newpage

\bibliographystyle{plain}
\bibliography{bib}

\end{document}